\newtheorem{theorem}{Theorem}
\newtheorem{proposition}[theorem]{Proposition}
\newtheorem{remark}[theorem]{Remark}
\begin{document}
\title[ Pricing of Basket Options]{A Note on the Pricing of Basket Options Using Taylor Approximations }
\author[Olivares, Alvarez]{Pablo Olivares, Alexander Alvarez}
\address{Department of Mathematics, Ryerson University}

\begin{abstract}
In this paper we propose a closed-form approximation for the price of basket options under a multivariate Black-Scholes model, based on Taylor expansions and the calculation of mixed exponential-power moments of a Gaussian distribution. Our numerical results show that a second order expansion provides accurate prices of spread options with low computational costs, even for out-of-the-money contracts.
\end{abstract}

\keywords{Taylor approximations, basket options, spread options.}
\maketitle

\section{Introduction}
The objective of the paper is the  pricing of basket options  using Taylor approximations  under diffusion multivariate models with constant covariance. Basket options are multivariate extensions of  European calls or puts. A basket option takes the weighted average of a group of $d$ stocks as the underlying, and produces a payoff equal to the maximum of zero and the difference between the weighted average and the strike (or the opposite difference for the case of a put). Index options, whose value depends on the movement of an equity or other financial index such as the S\&P500, are examples of basket options.\\
For the particular case of spread options, several approximations have been previously considered in the works of Kirk(1995), Carmona and Durrleman(2003), Li, Deng  and Zhou(2008, 2010), Venkatramanan and Alexander(2011)  where different ad-hoc approaches are studied. \\
As an alternative  Fast Fourier Transform methods have been successfully  implemented to compute spread prices under more general Levy processes, see  Hurd and Zhou(2009) and Cane and Olivares(2014) and under stochastic volatility models in Depmster and Hong(2000).\\
The approach to pricing by Taylor expansions can be traced back to  Hull and White(1987), where the price of a one dimensional derivative is calculated. On the other hand, following an idea in Pearson(1995) it can be extended to multidimensional contracts by conditioning on the remaining  $d-1$ underlying, reducing the problem to  one dimensional pricing with parameters arising from the resulting conditional distribution. It should be noticed that this technique has been used in Li, Deng  and Zhou(2008) for the case of spread options. Furthermore, in Li, Deng  and Zhou(2010) the Taylor expansion approximation is compared with other pricing techniques, proving to be effective and accurate for most values in the parametric space. \\
 Although in the same spirit,  in our case the expansion is done on the function resulting from the conditional price, as opposed to a development based on the conditional strike price, as previously considered by the authors cited above. Moreover, our method  hinges on the calculation of mixed exponential-power moments of a Gaussian distribution, it is extended to expansions about any point and higher dimensions. Our point of view may allow for a better control on the approximation, particularly for out-of-the-money options. On a related paper, see Alvarez, Escobar and Olivares(2011), we  apply a similar technique to the price of a spread option when correlation is stochastic, by expanding on the correlation matrix. \\
 The organization of the paper is the following, in section 2 we introduce some notations, the model and derive the Taylor approximation for basket options. In section 3 we specialize the formula for spread options and compute the mixed exponential-power moments of a Gaussian law. In section 4 we discuss our numerical results.
\section{Basket Derivatives and Taylor expansions}
We introduce some notations. Let  $(\Omega ,\mathcal{F},\{\mathcal{F}_{t}\}_{t>0},\mathbb{P})$ be a filtered probability space. We define the filtration  $\mathcal{F}^{X_t}:= \sigma(X_s, 0 \leq s \leq t)$ as the $\sigma$-algebra generated by the random variables $\{X_s, 0 \leq s \leq t \}$ completed in the usual way. Denote  by $\mathcal{Q}$ an equivalent martingale risk neutral measure and $E_{\mathcal{Q}}$ the expectation under $\mathcal{Q}$.\\
By $r$ we denote the (constant) interest rate, $A'$ represents the transpose of matrix $A=(a_{ij})_{1 \leq i,j \leq d}$ while $diag(A)$ is a vector with components $(a_{ii})_{1 \leq i \leq d}$. The d-dimensional column vector of ones is denoted by $1_d$.\\
For a $l$-times differentiable function $f$ in $\mathbb{R}^d$ and a vector $L=(l_1,l_2,\ldots,l_d)$ with $l_k \in \mathbb{N}$ such that
$\sum_{k=1}^n l_k=l$, $D^L f$ represents its mixed partial derivative of order $l$ differentiated $l_k$ times with respect to the variable $y_k$.\\
The process of spot prices is denoted by $S_t=\left(S^{(1)}_t,S^{(2)}_t,\ldots,S^{(d)}_t \right)'_{0 \leq t \leq T}$  and $Y_t=(Y_t^{(1)},Y_t^{(2)},\ldots,Y_t^{(d)})'_{0 \leq t \leq T}$ are the asset log-returns related by:
\begin{equation}\label{}
    S^{(j)}_t= S^{(j)}_0 \exp(Y_t^{(j)})\;\; \text{for}\;\; j=1,2,\ldots,d
\end{equation}
We analyze European Basket options whose payoff at maturity $T$, for a strike price $K$, is given by:
\begin{equation}\label{eq:basket}
    h(S_T)=\left( \sum_{j=1}^d w_j S_T^{(j)}-K \right)_+
\end{equation}
where $(w_j)_{1 \leq j \leq d}$ are some deterministic weights and $x_+=max(x,0)$. \\
As examples we have spread options, defined for  $d=2$ with  payoff:
\begin{equation}\label{eq:spread}
    h(S_T)=(  S_T^{(1)}- S_T^{(2)}-K )_+
\end{equation}
 Also, we have 3:2:1 crack spreads with $d=3$ and payoff:
    \begin{equation}\label{eq:cracks}
        h(S_T)=\left(\frac{2}{3} S_T^{(1)}-\frac{1}{3}S_T^{(2)}-S_T^{(3)}-K \right)_+
    \end{equation}
 where  $S_t^{(1)}$, $S_t^{(2)}$ and  $S_t^{(3)}$ are respectively the spot prices of gasoline, heating oil and crude oil.\\
 Exchange options are derivatives whose payoff is a particular case of (\ref{eq:spread}) when $K=0$. Exact formulas are available in the case of a diffusion, see Margrabe(1978).\\
We assume a multidimensional Black-Scholes dynamics  under the risk neutral probability following:
\begin{equation}\label{eq:bscmultid}
    dS_t=rS_t dt+ \Sigma^{\frac{1}{2}} S_t dB_t
\end{equation}
where $(B_t)_{t \geq 0}$ is a d-dimensional vector of  Brownian motions such that $d<B^{(l)}_t,B^{(m)}_t>=\rho_{lm} dt$, for $j,m=1,2,\ldots,d$ and $\Sigma $ is a positive definite symmetric matrix with components $(\sigma_{ij})_{i,j=1,2,\ldots,d}$ and $\sigma_{ii}=\sigma^2_i$.\\
We denote by $\tilde{Y}_t=(Y_t^{(2)},Y_t^{(3)}, \ldots, Y_t^{(d)})$ the  vector of log-returns, excluding the first component. The price of a basket option with maturity at $T>0$ and payoff $h(S_T)$ is:
\begin{eqnarray}\nonumber
p &=&e^{-rT}E_{\mathcal{Q}}h(S_{T})=E_{\mathcal{Q}}\left( e^{-rT}E_{\mathcal{Q}}\left[h(S_{T})|\mathcal{F}^{\tilde{Y}_T} \right]\right)= E_{\mathcal{Q}} \left[C(\tilde{Y}_T) \right] \\
 \label{eq:gen-price}
 &&
\end{eqnarray}
where:
\begin{equation*}
C (y):=E_{\mathcal{Q}}\left[ h(S_{T})|\mathcal{F}^{\tilde{Y}_T} \right]\mid_{\tilde{Y}_T=y}
\end{equation*}
Assuming $C(y)$ is smooth enough, we denote the n-th order Taylor development of $C$  around the point $y^* \in \mathbb{R}^{d-1}$ as  $\hat{C}_n(y)$. It is given by:
\begin{eqnarray}\label{eq:tqylorpricedev}
% \nonumber to remove numbering (before each equation)
 \hat{C}_n(y)  &=& \sum_{l=0}^n \sum_{R_l} \frac{D^L C(y^*)}{l_1!l_2!\ldots l_{d-1}!} \prod_{k=1}^{d-1}  (y_k-y^*_k)^{l_k}
\end{eqnarray}
where:\\
$L=(l_1,l_2,\ldots,l_{d-1})$ and $R_l=\{L \in \mathbb{N}^{d-1}/ l_1+l_2+\ldots+l_{d-1}=l, \;\; 0 \leq l_k \leq l \}$.\\
The next proposition provides the Taylor approximation for the price $p$ of a basket option:
\begin{proposition}
The n-th order Taylor approximation around $y^*=(y^*_1,y^*_2, \ldots, y^*_{d-1})$ of the price $p$ of a basket option with payoff $h(S_T)$, defined as  $\hat{p}_n:= e^{-rT} E_{\mathcal{Q}}\hat{C}_n(\tilde{Y}_T)$,  under model (\ref{eq:bscmultid}), is given by:
\begin{equation}\label{eq:approxgen}
    \hat{p}_n= w_1 \sum_{l=0}^n \sum_{R_l} \frac{D^L C(y^*)}{l_1!l_2!\ldots l_{d-1}!}E_{\mathcal{Q}}\left [e^{-(r-\frac{1}{2}\sigma^2_{Y_T^{(1)}/\tilde{Y}_T})T+\mu_{Y_T^{(1)}/\tilde{Y}_T}}\prod_{k=1}^{d-1}  (Y_T^{(k+1)}-y^*_k)^{l_k} \right]
\end{equation}
where for $ y \in \mathbb{R}^{d-1}$:
\begin{equation}\label{eq:bschkvariab}
    C(y):=C_{BS}(K(y),\sigma_{Y_T^{(1)}/\tilde{Y}_T=y},S_0^{(1)})
\end{equation}
 is the Black-Scholes price of a call option  with strike price $K(y)$, maturity at $T>0$, volatility  $\sigma_{Y_T^{(1)}/\tilde{Y}_T=y^*}$, spot price $S_0^{(1)}$ and strike price:
  \begin{equation}\label{}
    K(y)=\frac{1}{w_1}e^{(r-\frac{1}{2}\sigma_{Y_T^{(1)}/\tilde{Y}_T=y}^2)T-\mu_{Y_T^{(1)}/\tilde{Y}_T=y}}\left(K-\sum_{j=2}^d w_j S_0^{(j)}e^{y^{*{(j)}}} \right)
  \end{equation}
 with:
 \begin{equation}\label{eq:mu}
\mu_{Y_T^{(1)}/\tilde{Y}_T}= (r-\frac{1}{2}\sigma_1^2)T+\Sigma_{1\tilde{Y}}\Sigma^{-1}_{\tilde{Y}}(\tilde{Y}-r+ \frac{1}{2}diag(\Sigma_{\tilde{Y}}))T
 \end{equation}
 \begin{equation}\label{eq:sigma}
\sigma_{Y_T^{(1)}/\tilde{Y}_T}=\sigma_1^2-\Sigma_{1\tilde{Y}}\Sigma^{-1}_{\tilde{Y}}\Sigma_{1\tilde{Y}}'
 \end{equation}
 \begin{equation*}
    \Sigma_{1\tilde{Y}}=(\sigma_{12},\sigma_{13},\ldots, \sigma_{1,d-1})'
 \end{equation*}
 and $\Sigma_{\tilde{Y}}$ is the covariance matrix of the vector $\tilde{Y}_T$.
\end{proposition}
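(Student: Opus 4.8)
\emph{Sketch of the argument.} I would prove the proposition in three steps: reduce the $d$-dimensional expectation in (\ref{eq:gen-price}) to a one-dimensional one by conditioning on $\tilde{Y}_T$; identify the conditional price with a Black--Scholes call price after a change of strike and a deterministic rescaling; and substitute the Taylor polynomial, exchanging the finite sum with $E_{\mathcal{Q}}$. For the first step, on $\{\tilde{Y}_T=y\}$ the assets $S_T^{(2)},\ldots,S_T^{(d)}$ are frozen at $S_0^{(j)}e^{y^{(j)}}$, so the payoff (\ref{eq:basket}) reduces to $\left(w_1 S_0^{(1)}e^{Y_T^{(1)}}-\tilde{K}(y)\right)_+$ with $\tilde{K}(y):=K-\sum_{j=2}^d w_j S_0^{(j)}e^{y^{(j)}}$, and $C$ is then a genuinely one-dimensional object. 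The key fact that makes it explicit is that under (\ref{eq:bscmultid}) the vector $(Y_T^{(1)},\tilde{Y}_T)$ is jointly Gaussian; the standard conditioning formulas for a Gaussian vector then give that $Y_T^{(1)}$ given $\{\tilde{Y}_T=y\}$ is Gaussian with mean (\ref{eq:mu}), affine in $y$, and with variance (\ref{eq:sigma}), the Schur complement, which is independent of $y$ and strictly positive because $\Sigma$ is positive definite.

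For the second step, the conditional mean is in general not the martingale drift $(r-\frac12\sigma^2_{Y_T^{(1)}/\tilde{Y}_T})T$, so $w_1 S_0^{(1)}e^{Y_T^{(1)}}$ is not yet a Black--Scholes terminal value. Writing $e^{Y_T^{(1)}}=e^{-(r-\frac12\sigma^2_{Y_T^{(1)}/\tilde{Y}_T})T+\mu_{Y_T^{(1)}/\tilde{Y}_T}}\;e^{(r-\frac12\sigma^2_{Y_T^{(1)}/\tilde{Y}_T})T+(Y_T^{(1)}-\mu_{Y_T^{(1)}/\tilde{Y}_T})}$, I would pull the first factor, which is deterministic given the conditioning, out of the positive part together with $w_1$ and rescale $\tilde{K}(y)$ into the $K(y)$ of the statement; since $Y_T^{(1)}-\mu_{Y_T^{(1)}/\tilde{Y}_T}$ is conditionally a centred Gaussian with variance $\sigma^2_{Y_T^{(1)}/\tilde{Y}_T}$, the discounted expectation that is left is exactly the Black--Scholes call price $C_{BS}(K(y),\sigma_{Y_T^{(1)}/\tilde{Y}_T},S_0^{(1)})$ of (\ref{eq:bschkvariab}). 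This gives $e^{-rT}E_{\mathcal{Q}}\!\left[h(S_T)\mid\tilde{Y}_T=y\right]=w_1\,e^{-(r-\frac12\sigma^2_{Y_T^{(1)}/\tilde{Y}_T})T+\mu_{Y_T^{(1)}/\tilde{Y}_T}}\,C(y)$.

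For the third step, I would replace $C$ in the last identity by its $n$-th order Taylor polynomial $\hat{C}_n$ around $y^*$, given by (\ref{eq:tqylorpricedev}), so that $\hat{p}_n=E_{\mathcal{Q}}\!\left[w_1\,e^{-(r-\frac12\sigma^2_{Y_T^{(1)}/\tilde{Y}_T})T+\mu_{Y_T^{(1)}/\tilde{Y}_T}}\,\hat{C}_n(\tilde{Y}_T)\right]$; carrying the finitely many deterministic coefficients $D^L C(y^*)/(l_1!\cdots l_{d-1}!)$ outside by linearity of $E_{\mathcal{Q}}$ and recombining the rescaling exponential with the monomials $\prod_{k=1}^{d-1}(Y_T^{(k+1)}-y_k^*)^{l_k}$ yields (\ref{eq:approxgen}). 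Every summand is finite: since $\mu_{Y_T^{(1)}/\tilde{Y}_T}$ is affine in the Gaussian vector $\tilde{Y}_T$, each expectation is a mixed exponential-power moment of a Gaussian law, which is exactly the quantity evaluated in closed form in Section~3.

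I expect the main difficulty to be the second step, precisely because the conditional drift differs from the martingale drift: a naive identification of the conditional price with a Black--Scholes price fails, and one has to track both the rescaling factor --- the origin of the exponential weight inside the expectation in (\ref{eq:approxgen}) --- and the induced modification of the strike $K(y)$. The remaining points, namely the interchange of the finite Taylor sum with $E_{\mathcal{Q}}$ and the smoothness of $C$ near $y^*$ needed for (\ref{eq:tqylorpricedev}), are routine, since $C_{BS}$ is $C^\infty$ in strike and spot on the relevant range and $K(\cdot)$ is smooth.
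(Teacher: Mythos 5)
Your proposal is correct and follows essentially the same route as the paper's own proof: condition on $\tilde{Y}_T$, use the Gaussian conditional law of $Y_T^{(1)}$ with mean (\ref{eq:mu}) and variance (\ref{eq:sigma}), factor $e^{Y_T^{(1)}}$ so that the deterministic weight $e^{-(r-\frac{1}{2}\sigma^2_{Y_T^{(1)}/\tilde{Y}_T})T+\mu_{Y_T^{(1)}/\tilde{Y}_T}}$ comes out of the positive part and the inner expectation becomes the Black--Scholes price with rescaled strike $K(y)$, then substitute the Taylor polynomial and use linearity of $E_{\mathcal{Q}}$. No substantive difference from the paper's argument.
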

\begin{proof}
From equation (\ref{eq:bscmultid}) a straightforward application of Ito formula leads to:
\begin{equation}\label{eq:mod:logretpri}
    Y_T= (r1_d-\frac{1}{2} diag(\Sigma))T+ \Sigma^{\frac{1}{2}} \sqrt{T} Z_d
\end{equation}
in law, where $Z_d$ is a random variable with a  multivariate normal distribution in $\mathbb{R}^d$ with zero mean and covariance matrix $I_d$. Hence $Y_T$ has also a multivariate normal distribution. Also conditionally on $\tilde{Y}_T$, the random variable $Y_T^{(1)}$ has a univariate normal distribution. Thus, we can write:
 \begin{equation}\label{eq:condnormalrv}
    Y_T^{(1)}= \mu_{Y_T^{(1)}/\tilde{Y}_T}+\sigma_{Y_T^{(1)}/\tilde{Y}_T}\sqrt{T}Z^{(1)}
 \end{equation}
 in law,  where $Z^{(1)}$ is independent of $Y_T$ and it has, conditionally on $\tilde{Y}_T$, a standard univariate normal distribution. Moreover it is well known, see for example Tong (1989), that  $\mu_{Y_T^{(1)}/\tilde{Y}_T}$ and $\sigma_{Y_T^{(1)}/\tilde{Y}_T}$ are given by equations (\ref{eq:mu}) and (\ref{eq:sigma}) respectively. \\
 Next, from equation (\ref{eq:gen-price}) we have:
\begin{eqnarray} \notag
p &=&e^{-rT}E_{\mathcal{Q}}\left( E_{\mathcal{Q}}\left( h(S_{T})|\mathcal{F}^{\tilde{Y}_T} \right) \right) \\ \notag
&=& w_1 e^{-rT}E_{\mathcal{Q}} \left(E_{\mathcal{Q}}\left[\left( S_0^{(1)}e^{Y_T^{(1)}}-\left(\frac{K}{w_1}-\sum_{j=2}^d \frac{w_j}{w_1}S_0^{(j)}e^{Y_T^{(j)}}\right)  \right)_+|\mathcal{F}^{\tilde{Y}_T} \right] \right) \\ \notag
&=& w_1 e^{-rT}E_{\mathcal{Q}} \left(E_{\mathcal{Q}} \left[\left( S_0^{(1)}e^{Y_T^{(1)}}-K'(\tilde{Y}_T)  \right)_+|\mathcal{F}^{\tilde{Y}_T} \right] \right) \\ \label{eq:pgral}
 \end{eqnarray}
 where $K'(y)=\frac{K}{w_1}-\sum_{j=2}^d \frac{w_j}{w_1}S_0^{(j)}e^{y^{(j)}}$.\\
 Moreover, substituting equation (\ref{eq:condnormalrv}) into (\ref{eq:pgral}) we have:
 \small{
 \begin{eqnarray*}
 % \nonumber to remove numbering (before each equation)
   p &=& w_1 e^{-rT}E_{\mathcal{Q}} \left[E_{\mathcal{Q}}\left(\left( S_0^{(1)}e^{\mu_{Y_T^{(1)}/\tilde{Y}_T}+\sigma_{Y_T^{(1)}/\tilde{Y}_T}\sqrt{T}Z^{(1)}}-K'(\tilde{Y}_T)  \right)_+|\mathcal{F}^{\tilde{Y}_T}  \right) \right] \\
    &=& w_1 e^{-rT}E_{\mathcal{Q}}\left[e^{-rT+\frac{1}{2}\sigma^2_{Y_T^{(1)}/\tilde{Y}_T}T+\mu_{Y_T^{(1)}/\tilde{Y}_T}}\right.\\
     &&\left. E_{\mathcal{Q}}\left( \left( S_0^{(1)}e^{rT-\frac{1}{2}\sigma^2_{Y_T^{(1)}/\tilde{Y}_T}T+\sigma_{Y_T^{(1)}/\tilde{Y}_T}\sqrt{T}Z^{(1)}}
 -  K'(\tilde{Y}_T) \right)_+|\mathcal{F}^{\tilde{Y}_T} \right)\right] \\
 &=& w_1 E_{\mathcal{Q}}\left[e^{-rT+\frac{1}{2}\sigma^2_{Y_T^{(1)}/\tilde{Y}_T}T+\mu_{Y_T^{(1)}/\tilde{Y}_T}} C(\tilde{Y}_T) \right]
  \end{eqnarray*}
  }
  where:
  \begin{eqnarray*}
  C(\tilde{Y}_T)&:=& C_{BS}(K(\tilde{Y}_T),\sigma_{Y_T^{(1)}/\tilde{Y}_T},S_0^{(1)})\\
  &=& e^{-rT} E_{\mathcal{Q}}\left[\left( S_0^{(1)}e^{(r-\frac{1}{2}\sigma_{Y_T^{(1)}/\tilde{Y}_T}^2)T+\sigma_{Y_T^{(1)}/\tilde{Y}_T}\sqrt{T}Z^{(1)}}
 -  K(\tilde{Y}_T) \right)_+|\mathcal{F}^{\tilde{Y}_T}\right]
  \end{eqnarray*}
  Applying a n-th order Taylor development around $y^*=(y^*_1,y^*_2, \ldots, y^*_{d-1}) \in \mathbb{R}^{d-1}$ to $C(y)$ we compute the approximated conditional price based on the first underlying and conditional on the remaining $d-1$ by :
\begin{equation}\label{eq:blacksholespriceexp}
% \nonumber to remove numbering (before each equation)
 \hat{C}_{n}(\tilde{Y}_T)  = \sum_{l=0}^n \sum_{R_l} \frac{D^{L}C(y^*)}{\prod_{k=1}^{d-1} l_k!} \prod_{k=1}^{d-1}  (Y_T^{(k+1)}-y^*_k)^{l_k}
\end{equation}
  After replacing equation (\ref{eq:blacksholespriceexp}) into the expression for $p$ above  we get immediately equation (\ref{eq:approxgen}) in Proposition 1.
  \end{proof}
  \begin{remark}
  Notice that the approximation  $\hat{p}_k$ depends only on the derivatives of the function $C(y)$ with respect $y$, which in turn is computed as the Black-Scholes price composed with the function $K(y)$  and the mixed exponential-power moments of a Gaussian multivariate distribution.
  \end{remark}
  \begin{remark}
  Sensitivities to the parameters can be computed by a similar approximation, as \textit{Greeks} for a Black-Scholes option model are known. For example the delta with respect to the j-th asset can be approximated by:
\begin{equation*}\label{eq:delapproxgen}
    \hat{\Delta}^{(j)}_n= w_1 \sum_{l=0}^n \sum_{R_l} \frac{D^L \frac{\partial C(y^*)}{\partial s^{(j)}}}{l_1!l_2!\ldots l_{d-1}!}E_{\mathcal{Q}}\left [e^{-(r-\frac{1}{2}\sigma^2_{Y_T^{(1)}/\tilde{Y}_T})T+\mu_{Y_T^{(1)}/\tilde{Y}_T}}\prod_{k=1}^{d-1}  (Y_T^{(k+1)}-y^*_k)^{l_k} \right]
\end{equation*}
  \end{remark}
    \section{Pricing spreads options by Taylor approximations}
In order to illustrate the method studied in the previous section we consider the case of a bidimensional spread option  under model (\ref{eq:bscmultid}) with covariance matrix:
 \begin{equation*}
  \Sigma=  \left(
      \begin{array}{cc}
        \sigma_1^2 & 0 \\
        0 & \sigma_2^2 \\
      \end{array}
    \right)
 \end{equation*}
 We find the n-th Taylor approximation in this specific situation.
 Denoting by $d<B_t^{(1)},B_t^{(2)}>\rho dt$ we have that:
 \begin{equation}  \label{eq:risk-neutral-constant}
Y_T=(Y_T^{(1)},Y_T^{(2)}) \sim N \left( (r1_2-\frac{1}{2}  diag(\Sigma))T, T
\Sigma_{\rho} \right)
\end{equation}
where:
\begin{equation}\label{}
    \Sigma_{\rho}=\left(\begin{array}{ll}
                   \sigma_1^2 & \rho \sigma_1 \sigma_2 \\
                 \rho \sigma_1 \sigma_2   & \sigma_2^2
                 \end{array}
                 \right)
\end{equation}
From equation (\ref{eq:mod:logretpri}) the conditional distribution of $Y_T^{(1)}$ given $Y_T^{(2)}$ is:
 \begin{equation*}  \label{eq:risk-neutral-constant2}
Y_T^{(1)}/Y_T^{(2)} \sim N \left(  r(1-\frac{\sigma_1}{\sigma_2} \rho)T+ \frac{1}{2}\sigma_1 \sigma_2 \rho T+\frac{\sigma_1}{\sigma_2} \rho Y_T^{(2)}-\frac{1}{2} \sigma_1^2 T,  (1-\rho^2)\sigma_1^2 T \right)
\end{equation*}
Thus we can write:
\begin{equation*}\label{}
Y_T^{(1)}= \mu(Y^{(2)}_T)+\sigma \sqrt{T}Z
\end{equation*}
in law, where $Z \sim N(0,1)$ independent of $Y_T$, with
\begin{equation}\label{eq:mu2}
    \mu(Y_T^{(2)}):=\mu_{Y_T^{(1)}/\tilde{Y}_T}=r(1-\frac{\sigma_1}{\sigma_2}\rho) T+\frac{1}{2}\sigma_1( \sigma_2 \rho- \sigma_1) T+\frac{\sigma_1}{\sigma_2}\rho Y_T^{(2)}
\end{equation}
and
\begin{equation*}\label{eq:sigma2}
\sigma:=\sigma_{Y_T^{(1)}/\tilde{Y}_T}=\sqrt{(1-\rho^2)}\sigma_1
\end{equation*}
From Proposition 1 the n-th approximation simplifies to:
\begin{equation}\label{eq:approxsoread}
    \hat{p}_n=  \sum_{l=0}^n  \frac{D^lC(y^*)}{l!}E_{\mathcal{Q}}\left [e^{-(r-\frac{1}{2}\sigma^2)T+\mu(Y_T^{(2)})}  (Y_T^{(2)}-y^*)^{l} \right]
\end{equation}
Moreover:
\begin{equation*}
% \nonumber to remove numbering (before each equation)
 E_{\mathcal{Q}}\left [e^{(-r+\frac{1}{2}\sigma^2)T+\mu(Y_T^{(2)})}  (Y_T^{(2)}-y^*)^{l} \right]= e^{A} E_{\mathcal{Q}}\left [ e^{\frac{\sigma_1}{\sigma_2}\rho Y_T^{(2)} }(Y_T^{(2)}-y^*)^{l}\right]
\end{equation*}
where:
\begin{eqnarray*}
    A&=& (-(r-\frac{1}{2}\sigma^2) +r (1-\frac{\sigma_1}{\sigma_2}) \rho  +\frac{1}{2}\sigma_1(\sigma_2 \rho-\sigma_1))T \\
        &=&-(\frac{1}{2}\rho^2 \sigma_1^2+r \frac{\sigma_1}{\sigma_2} \rho- \frac{1}{2} \sigma_1 \sigma_2 \rho)T
\end{eqnarray*}
Now, from equation (\ref{eq:risk-neutral-constant}) we have that $Y_T^{(2)} \sim N((r-\frac{1}{2}\sigma_2^2)T, T \sigma^2_2)$, then
 the exponential-power moments can be calculated as follows:
\begin{eqnarray*}
% \nonumber to remove numbering (before each equation)
  && E_{\mathcal{Q}}\left [ e^{\frac{\sigma_1}{\sigma_2}\rho Y_T^{(2)} }(Y_T^{(2)}-y^*)^l\right] \\
  &=& \sum_{m=0}^l \left(\begin{array}{c}
                    l \\
                    m
                  \end{array} \right) \left((r-\frac{1}{2}\sigma_2^2)T-y^* \right)^{l-m}E_{\mathcal{Q}} \left[e^{\frac{\sigma_1}{\sigma_2}\rho Y_T^{(2)}}(Y_T^{(2)}-E_{\mathcal{Q}}(Y_T^{(2)}))^m \right]\\
   &=&
  \sum_{m=0}^l \left(\begin{array}{c}
                    l \\
                    m
                  \end{array} \right) \left((r-\frac{1}{2}\sigma_2^2)T-y^*\right)^{l-m}T^{\frac{m}{2}}\sigma_2^m  e^{\frac{\sigma_1}{\sigma_2}\rho (r-\frac{1}{2}\sigma_2^2)T} E_{\mathcal{Q}}\left[ e^{\sqrt{T}\sigma_1 \rho Z}Z^m \right]\\
                   &=&  e^{\frac{\sigma_1}{\sigma_2}\rho (r-\frac{1}{2}\sigma_2^2)T}
  \sum_{m=0}^l \left(\begin{array}{c}
                    l \\
                    m
                  \end{array} \right)  \left(\sqrt{T}\sigma_2 \right)^m B(y^*)^{l-m}  E_{\mathcal{Q}}\left[ e^{\sqrt{T}\sigma_1 \rho Z}Z^m \right]
\end{eqnarray*}
where:\\
$B(y^*)=(r-\frac{1}{2}\sigma_2^2)T-y^*$\\
Next  integrate by parts:
\begin{eqnarray*}
% \nonumber to remove numbering (before each equation)
 E_{\mathcal{Q}}\left[ e^{\sqrt{T}\sigma_1 \rho Z}Z^m \right]  &=& \frac{1}{\sqrt{2 \pi}}\int_{\mathbb{R}}e^{-\frac{1}{2}(x^2-2 \sigma_1 \rho \sqrt{T}x)}x^m dx\\
    &=& e^{\frac{\sigma_1^2 \rho^2 T}{2} }\frac{1}{\sqrt{2 \pi}}\int_{\mathbb{R}}e^{-\frac{1}{2}(x- \sigma_1 \rho \sqrt{T})^2}x^m dx\\
   &=& e^{\frac{\sigma_1^2 \rho^2 T}{2} }\frac{1}{\sqrt{2 \pi}}\int_{\mathbb{R}}e^{-\frac{1}{2}y^2}(y+\sigma_1 \rho \sqrt{T})^m dy\\
&=& e^{\frac{\sigma_1^2 \rho^2 T}{2} } \sum_{\nu=0}^m \left(\begin{array}{c}
                    m \\
                    \nu
                  \end{array} \right)(\sigma_1 \rho \sqrt{T})^{m-\nu}E(Z^{\nu})\\
&=& e^{\frac{\sigma_1^2 \rho^2 T}{2} } \sum_{\nu=0}^{[\frac{m}{2}]} \left(\begin{array}{c}
                    m \\
                   2\nu
                  \end{array} \right)(\sigma_1 \rho \sqrt{T})^{m-2\nu}\frac{1}{\sqrt{2 \pi}}\int_{\mathbb{R}}e^{-\frac{1}{2}y^2}y^{2\nu} dy\\
&=& e^{\frac{\sigma_1^2 \rho^2 T}{2} } \sum_{\nu=0}^{[\frac{m}{2}]} \left(\begin{array}{c}
                    m \\
                   2\nu
                  \end{array} \right)(\sigma_1 \rho \sqrt{T})^{m-2\nu}(2 \nu-1)!!
\end{eqnarray*}
where $n!!$ is the double factorial defined as the product of all odd  numbers between 1 and $n$ including both. When the set is empty, by convention, the product is equal to one.\\
Similarly for $y^*=E_{\mathcal{Q}}(Y_T^{(2)})$ we  have:
\begin{eqnarray*}\label{}
 E_{\mathcal{Q}}\left [ e^{\frac{\sigma_1}{\sigma_2}\rho Y_T^{(2)} }(Y_T^{(2)}-y^*)^l\right] &=&  T^{\frac{l}{2}} \sigma_2^l e^{\frac{\sigma_1}{\sigma_2}\rho(r-\frac{1}{2}\sigma_2^2)T} e^{\frac{\sigma_1^2 \rho^2 T}{2} } \sum_{\nu=0}^l \left(\begin{array}{c}
                    l \\
                    \nu
                  \end{array} \right)(\sigma_1 \rho \sqrt{T})^{l-\nu}E(Z^{\nu})\\
 &=& T^{\frac{l}{2}} \sigma_2^l  e^{-A} \sum_{\nu=0}^{[\frac{l}{2}]} \left(\begin{array}{c}
                    l \\
                   2\nu
                  \end{array} \right)(\sigma_1 \rho \sqrt{T})^{l-2\nu}(2 \nu-1)!!
\end{eqnarray*}
After gathering all pieces  and substituting in equation (\ref{eq:approxsoread}) we have the following result:
\begin{proposition}
The n-th Taylor approximation of a spread contract with maturity at $T$ and strike price $K$, under the model (\ref{eq:bscmultid}) is given by:
\begin{eqnarray}\label{eq:approxspread2}\notag
    \hat{p}_n&=&   \sum_{l=0}^n  \sum_{m=0}^l \frac{ D^{l}C(y^*)}{l!}  \left(\begin{array}{c}
                    l \\
                    m
                  \end{array} \right)  \left(\sqrt{T}\sigma_2 \right)^m B(y^*)^{l-m}  E(m)
\end{eqnarray}
with:
\begin{equation*}\label{}
    E(m)=\sum_{\nu=0}^{m} \left(\begin{array}{c}
                    m \\
                   \nu
                  \end{array} \right)(\sigma_1 \rho \sqrt{T})^{m-\nu}E_{\mathcal{Q}} (Z^{\nu})
\end{equation*}
for $m=1,2,\ldots,k$ and $E(0)=1$, where $E_{\mathcal{Q}} Z^{\nu}=(\nu-1)!!$ if $\nu$ is even or zero if it is odd, and\\
\begin{equation*}
    K(y)=e^{(r-\frac{1}{2} \sigma^2) T-\mu(y)}(K+S_0^{(2)}e^y)=e^{-A} \left(Ke^{-\frac{\sigma_1}{\sigma_2}\rho y}+S_0^{(2)}e^{(1-\frac{\sigma_1}{\sigma_2}\rho) y} \right)
\end{equation*}
 with $\mu(y)$ given by equation (\ref{eq:mu2} ).
\end{proposition}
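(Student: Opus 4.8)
The plan is to obtain Proposition 2 as the specialization of Proposition 1 to $d=2$, followed by an explicit evaluation of the single mixed exponential-power Gaussian moment that then remains. First I would set $d=2$ in (\ref{eq:approxgen}): the multi-index $L$ collapses to a single integer $l$, the inner sum over $R_l$ disappears, $w_1=1$, and $\hat p_n$ reduces to (\ref{eq:approxsoread}), i.e. $\hat p_n=\sum_{l=0}^n \frac{D^l C(y^*)}{l!}\,E_{\mathcal{Q}}\!\left[e^{-(r-\frac12\sigma^2)T+\mu(Y_T^{(2)})}(Y_T^{(2)}-y^*)^l\right]$, with $\mu(\cdot)$ and $\sigma$ read off from the bivariate normal law (\ref{eq:risk-neutral-constant}) via the conditioning formulas (\ref{eq:mu})--(\ref{eq:sigma}), so that $\mu(Y_T^{(2)})$ is given by (\ref{eq:mu2}) and $\sigma=\sqrt{1-\rho^2}\,\sigma_1$.

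Second, I would split the exponent into its $Y_T^{(2)}$-dependent and constant parts. Writing $\mu(Y_T^{(2)})=\big(r(1-\tfrac{\sigma_1}{\sigma_2}\rho)+\tfrac12\sigma_1(\sigma_2\rho-\sigma_1)\big)T+\tfrac{\sigma_1}{\sigma_2}\rho\,Y_T^{(2)}$ and combining the constant term with $-(r-\tfrac12\sigma^2)T$ produces $A=-\big(\tfrac12\rho^2\sigma_1^2+r\tfrac{\sigma_1}{\sigma_2}\rho-\tfrac12\sigma_1\sigma_2\rho\big)T$, hence $E_{\mathcal{Q}}[e^{-(r-\frac12\sigma^2)T+\mu(Y_T^{(2)})}(Y_T^{(2)}-y^*)^l]=e^{A}\,E_{\mathcal{Q}}[e^{\frac{\sigma_1}{\sigma_2}\rho Y_T^{(2)}}(Y_T^{(2)}-y^*)^l]$. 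Using $Y_T^{(2)}\sim N((r-\tfrac12\sigma_2^2)T,\,T\sigma_2^2)$ I would represent $Y_T^{(2)}=(r-\tfrac12\sigma_2^2)T+\sqrt{T}\sigma_2 Z$ with $Z\sim N(0,1)$, expand $(Y_T^{(2)}-y^*)^l=\big(B(y^*)+\sqrt{T}\sigma_2 Z\big)^l$ by the binomial theorem with $B(y^*)=(r-\tfrac12\sigma_2^2)T-y^*$, and pull the deterministic factor $e^{\frac{\sigma_1}{\sigma_2}\rho(r-\frac12\sigma_2^2)T}$ out of the expectation.

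Third, the core computation is $E_{\mathcal{Q}}[e^{\sqrt{T}\sigma_1\rho Z}Z^m]$, which I would evaluate by completing the square, $-\tfrac12x^2+\sigma_1\rho\sqrt{T}x=-\tfrac12(x-\sigma_1\rho\sqrt{T})^2+\tfrac12\sigma_1^2\rho^2 T$, shifting the integration variable, expanding $(y+\sigma_1\rho\sqrt{T})^m$ binomially, and invoking the elementary Gaussian moments $E(Z^\nu)=(\nu-1)!!$ for $\nu$ even and $0$ for $\nu$ odd (equivalently, keeping only the even indices). Collecting the resulting factor $e^{\frac12\sigma_1^2\rho^2 T}$ together with the other deterministic exponentials reconstitutes exactly $e^{-A}$, which cancels the $e^{A}$ from the previous step; what survives is the double sum over $l$ and $m$ with the coefficient $E(m)$, which is (\ref{eq:approxspread2}). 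Finally, the stated form of $K(y)$ follows by substituting $\mu(y)$ from (\ref{eq:mu2}) and $\sigma^2=(1-\rho^2)\sigma_1^2$ into the general strike $K(y)$ of Proposition 1 and again recognizing the constant $A$. The work here is entirely bookkeeping; the only point needing care is tracking the several deterministic exponential factors ($e^A$, $e^{\frac{\sigma_1}{\sigma_2}\rho(r-\frac12\sigma_2^2)T}$, $e^{\frac12\sigma_1^2\rho^2T}$) so that their cancellation down to the clean formula (\ref{eq:approxspread2}) is transparent, and there is no genuine analytic obstacle once Proposition 1 is in hand.
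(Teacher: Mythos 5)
Your proposal is correct and follows essentially the same route as the paper: specialize Proposition 1 to $d=2$ to get $\hat p_n=\sum_{l}\frac{D^lC(y^*)}{l!}E_{\mathcal{Q}}[e^{-(r-\frac12\sigma^2)T+\mu(Y_T^{(2)})}(Y_T^{(2)}-y^*)^l]$, factor out $e^A$, expand $(Y_T^{(2)}-y^*)^l$ binomially around $B(y^*)$, and compute $E_{\mathcal{Q}}[e^{\sqrt{T}\sigma_1\rho Z}Z^m]$ by completing the square and using the even Gaussian moments $(2\nu-1)!!$. Your explicit remark that $e^{\frac{\sigma_1}{\sigma_2}\rho(r-\frac12\sigma_2^2)T}e^{\frac12\sigma_1^2\rho^2T}=e^{-A}$ cancels the prefactor $e^{A}$ is exactly the bookkeeping the paper leaves implicit when assembling the final sum.
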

Next, we compute the derivatives of the function $C(y)$ with respect to $y$. From the Black-Scholes pricing formula:
\begin{equation*}\label{}
    C(y):= C_{BS}(K(y),\sigma,S^{(1)}_0)=S^{(1)}_0 N(d_1(K(y))-K(y)e^{-rT} N(d_2(K(y))
\end{equation*}
where:
\begin{eqnarray*}\label{}
    d_1(K(y))&=&\frac{\log \left(\frac{S^{(1)}_0}{K(y)} \right)+(r+\frac{\sigma^2}{2})T}{\sigma \sqrt{T}}\\
    d_2(K(y))&=& d_1(K(y)-\sigma \sqrt{T}
\end{eqnarray*}
and $N(.)$ is the cumulated distribution function of a standard normal distribution.\\
The first two derivatives are computed by elementary methods.\\
First notice that:
\begin{equation*}
   D^1 K(y)=e^{-A} \left(-\frac{\sigma_1}{\sigma_2} \rho K e^{-\frac{\sigma_1}{\sigma_2}\rho y}+S_0^{(2)}(1-\frac{\sigma_1}{\sigma_2} \rho)e^{(1-\frac{\sigma_1}{\sigma_2}\rho y)} \right)
\end{equation*}
 \begin{equation*}
   D^2 K(y)=e^{-A} \left((\frac{\sigma_1}{\sigma_2} \rho)^2 K e^{-\frac{\sigma_1}{\sigma_2}\rho y}+S_0^{(2)}(1-\frac{\sigma_1}{\sigma_2} \rho)^2 e^{(1-\frac{\sigma_1}{\sigma_2}\rho) y} \right)
\end{equation*}
Also:
 \begin{eqnarray*}\label{}
 D^{1}C_{BS}(y)&=& S_0^{(1)}f_Z(d_1(K(y)))D^1 d_1(K(y))-e^{-rT} D^1 K(y)N(d_2(K(y)))\\
 &-& e^{-rT} K(y)f_Z(d_2(K(y))) D^1 d_1(K(y))\\
 &=&-\frac{D^{1}K(y)}{K(y)\sigma \sqrt{T}} A_2(y)
 \end{eqnarray*}
where $f_Z$ is the density function of a standard normal random variable and
\begin{equation*}
  A_2(y)=S_0^{(1)}f_Z(d_1(K(y)))+\sigma \sqrt{T}e^{-rT} K(y)N(d_2(K(y)))-e^{-rT}K(y)f_Z(d_2(K(y)))
\end{equation*}
Similarly the second derivative is obtained as:
\begin{eqnarray*}
     D^2C(y) &=&-\frac{1}{\sigma \sqrt{T}}\left[A_2(y) \frac{K(y)D^2K(y)-(D^1K(y))^2}{K^2(y)}+D^1 A_2(y)\frac{D^{1}K(y)}{K(y)}\right]
 \end{eqnarray*}
 with:
 \begin{eqnarray*}
    D^1 A_2(y)&=&-S_0^{(1)}f_Z(d_1(K(y)))d_1(K(y)) D^1 d_1(K(y))+\sigma \sqrt{T}e^{-rT}D^1 K(y)N(d_2(K(y)))\\
    &+& \sigma \sqrt{T}e^{-rT}K(y)f_Z(d_2(K(y)))D^1 d_1(K(y))\\
    &+& e^{-rT} f_Z(d_2(K(y)))D^1 d_2(K(y))d_2(K(y))K(y)\\
     &=&\frac{D^{1}K(y)}{K(y)\sigma \sqrt{T}} \left[S_0^{(1)}f_Z(d_1(K(y)))d_1(K(y))+\sigma^2 Te^{-rT}K(y)N(d_2(K(y)))\right.\\
    &-& \left. 2 \sigma \sqrt{T}e^{-rT}K(y)f_Z(d_2(K(y)))-e^{-rT}K(y)f_Z(d_2(K(y)))d_2(K(y)) \right]\\
   \end{eqnarray*}
 In particular when we develop around  $y_{mean}=E_{\mathbb{Q}}(Y_T^{(2)})=(r-\frac{1}{2} \sigma_2^2) T$ we have the first and second approximations given respectively by :
\begin{eqnarray*}\label{}
 \hat{p}_1&=&   C(y_{mean})+\sigma_1 \sigma_2 \rho T D^{1}C(y_{mean})\\
 \hat{p}_2 &=& \hat{p}_1+  \frac{1}{2} \left[T \sigma^2_2(1+\sigma_1^2 \rho^2 T) \right]D^{2}C(y_{mean})
\end{eqnarray*}
More generally expanding around $y^*$  we have the first two approximations denoted by $\hat{p}_1(y^*)$ and $\hat{p}_2(y^*)$ respectively and given by:
\begin{eqnarray*}\label{}
    \hat{p}_1(y^*)&=& C(y^*)+ D^{1}C(y^*)(B(y^*) +\sqrt{T} \sigma_2 E(1))\\
    &=& C(y^*)+ D^{1}C(y^*)(B(y^*)+ T \sigma_1 \sigma_2 \rho )\\
    \hat{p}_2(y^*)&=& \hat{p}_1(y^*) +\frac{1}{2}D^{2}C(y^*) \left[B^2(y^*)+2 T \sigma_1 \sigma_2 \rho B(y^*)+ T \sigma_2^2(1+ T \sigma_1^2 \rho^2) \right]
\end{eqnarray*}
%As Taylor is a local expansion a better approximation might be achieved by expanding around two different points in two non-overlapping regions. Take %for example $y_1^*=y_{mean}-M$ and $y_2^*=y_{mean}+M$ to simplify. We define  the Taylor approximation of the conditional price $\hat{C}_n(y)$ as:
%\begin{equation*}
 %   \hat{C}_n(y)=\hat{C}^{(1)}_k(y)1_{[y \leq y_{mean}]}+\hat{C}^{(2)}_k(y)1_{[y > y_{mean}]}
%\end{equation*}
%where $\hat{C}^{(j)}_k(y)$ is the Taylor approximation around $y_j^*$ given   by equation (\ref{eq:tqylorpricedev}).\\
%Similarly to  proposition 1 we have that the approximate priced  around  $y_1^*$ and  $y_2^*$, denote by $\hat{p}_k(y_1^*, y_2^*)$ :
%\begin{eqnarray*}
% \nonumber to remove numbering (before each equation)
 %\hat{p}_k(y_1^*, y_2^*)  &=&
%\end{eqnarray*}

\section{Pricing Spreads: numerical results}
We consider spread options in the following benchmark numerical set:\\
$S_0^{(1)}=100$, $S_0^{(2)}=96$, $\sigma_1=0.3$, $\sigma_2=0.1$, $\rho=-0.3$, $r=0.03$, $K=1$ and $T=1$.\\
In Figure \ref{fig1} the graph of the conditional price $C(y)$ given by equation (\ref{eq:tqylorpricedev}) is shown (blue line), together with the first and second order Taylor approximation around the mean, for the benchmark parameter set.\\
Notice that the first approximation underestimates the price. Not surprisingly the second approximation estimates the price fairly well for values close to the point $y_{mean}$ while is less accurate for values far from the mean. Although it seems a drawback of the method it does not constitutes a serious problem as values far from the mean are unfrequent, thus the error in calculating the outer expected value by the Taylor approximation is small.
\begin{figure}[h!]
  % Requires \usepackage{graphicx}
  \includegraphics[width=12 cm, height=10 cm]{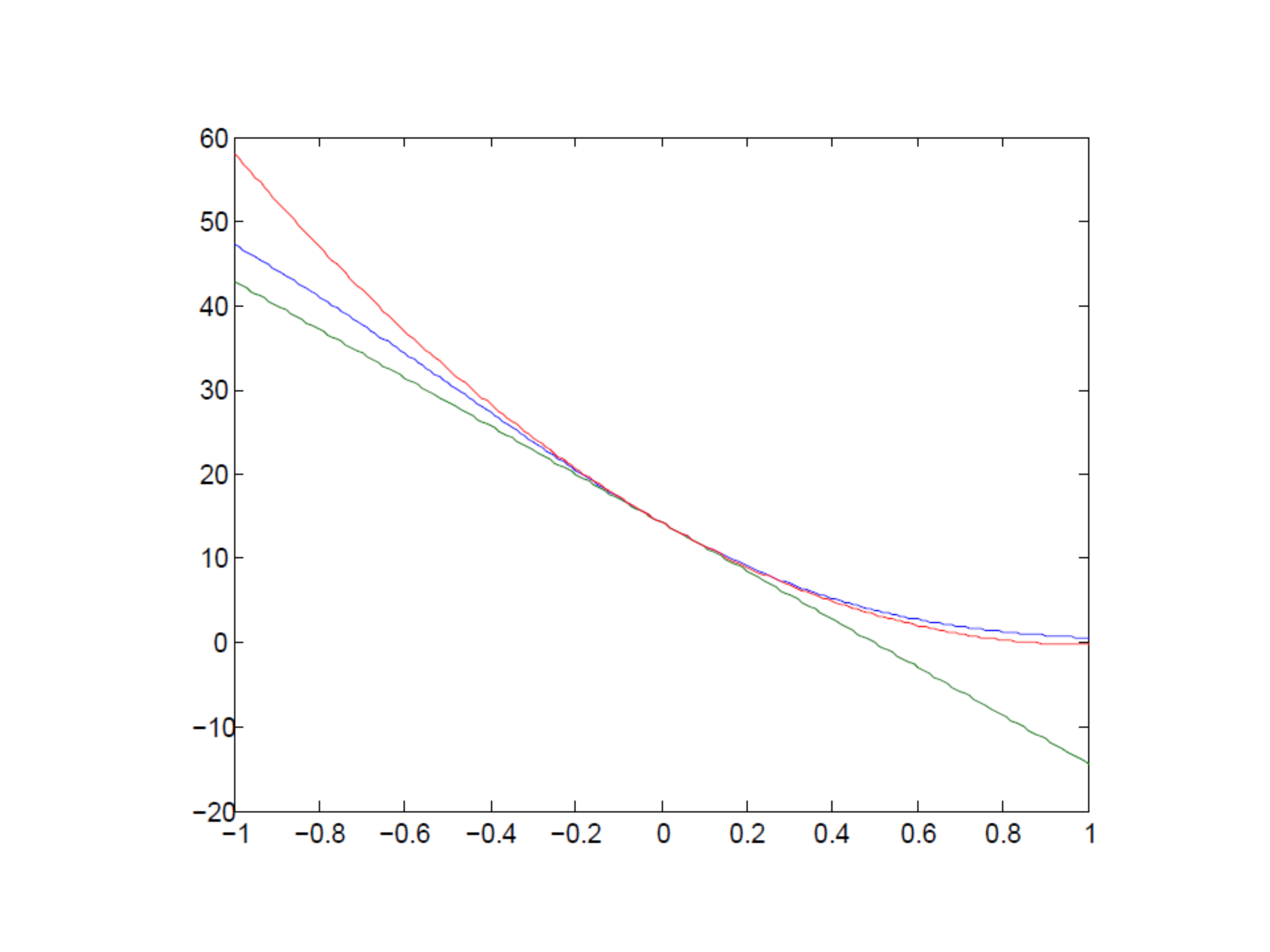}
  \caption{The function$C_{BS}(y)$ is shown in blue, together with the first and second order approximations around the mean for the benchmark parameters.}\label{fig1}
\end{figure}
In Figure \ref{fig2} a histogram for simulated returns on asset 1 (blue rectangles) and asset 2(red rectangles)  is shown. Notice that only a few values of the returns lie outside the interval $[-1,1]$.

\begin{figure}[h!]
  % Requires \usepackage{graphicx}
  \includegraphics[width=12 cm, height=10 cm]{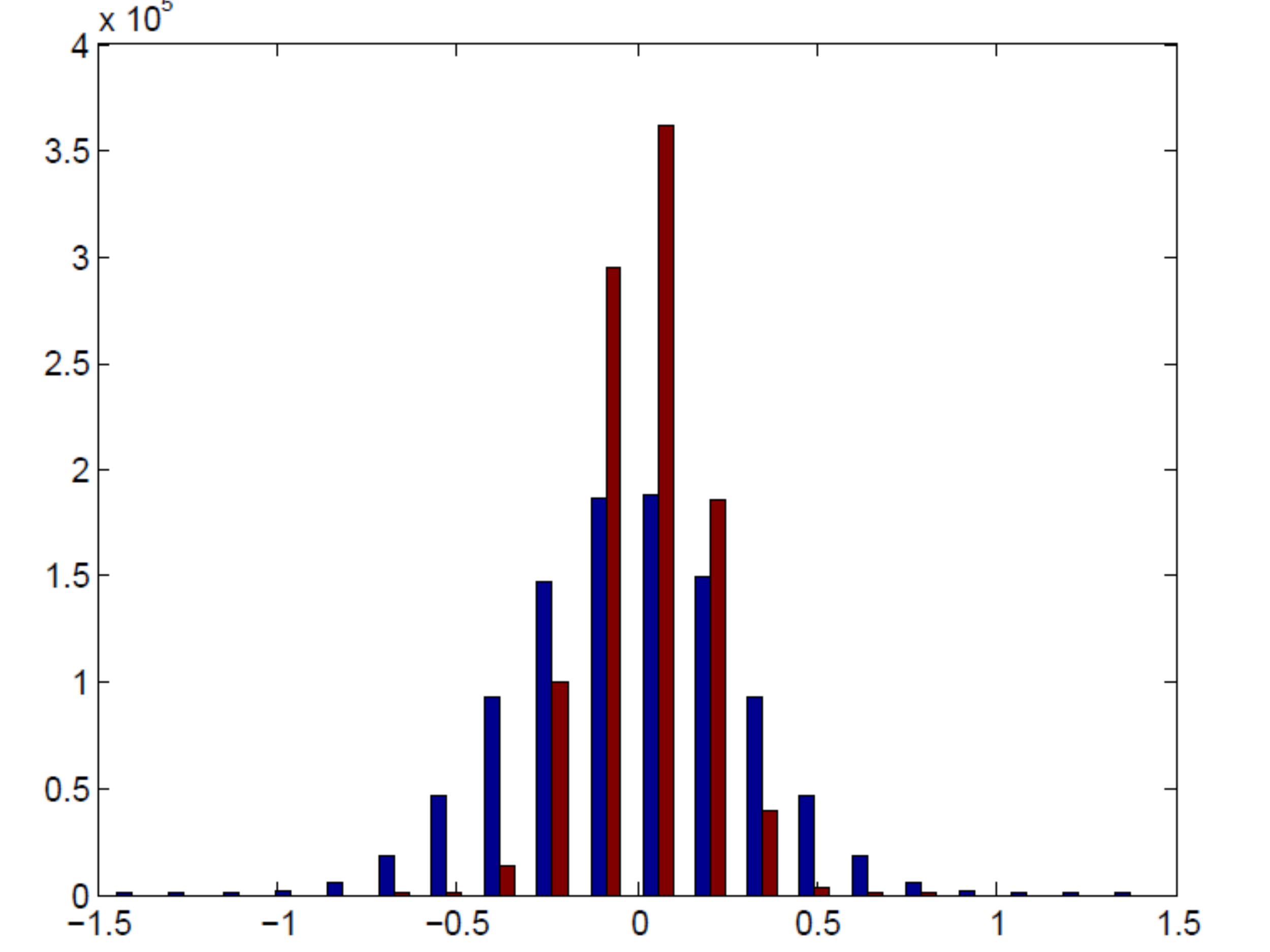}
  \caption{Histogram of simulated returns on  asset1 (blue rectangles) and asset 2 (red rectangles) for the benchmark parameter.}\label{fig2}
\end{figure}

Next we compare Taylor approximations with Monte Carlo simulations. In Table \ref{tab1}(column 2)  prices from Monte Carlo  are shown for the benchmark parameters, except the correlation parameter that takes values $\rho=-0.5,-0.3,0.3,0.5$. The number of simulations is $n=10^7$, where a stability of order $10^{-3}$ is attained. Partial Monte Carlo prices (shown in column 3) are obtained by sampling directly the one dimensional conditional price $C(Y_T^{(2)})$ and taking the corresponding  average of the payoff. It leads to a more efficient simulation algorithm as only one Brownian motion needs to be simulated, as oppose to two correlated Brownian in the standard Monte Carlo approach. It is done though at the expense of an extra evaluation of the Black-Scholes formula in every step.\\
Taylor prices of first and second order are shown in columns 4 and 5 of Table \ref{tab1}. The expansions take place around $y^*=0$. While in some cases the first order approximation reveals significant different with Monte Carlo, second order approximation shows an improved agreement with a relative error in the order of $10^{-4}$ for the parameter set considered.\\

\begin{table}
  \centering
  \begin{tabular}{|l|c|c|c|c|}
    \hline
    % after \\: \hline or \cline{col1-col2} \cline{col3-col4} ...
Correlation & Monte Carlo & Partial Monte Carlo & First approx. & Second approx.\\ \hline
  $\rho =0.3000$ & 12.7843& 12.7907& 12.7889&12.7901 \\
 \hline
  $\rho =-0.3000$   &  14.9734&14.9826 &  13.6063&15.0065  \\
\hline
 $\rho =0.5000$    & 11.9525 & 11.9544 & 11.8085 &11.9646 \\
 \hline
  $\rho =-0.5000$    &15.6273 &  15.6302&  13.2767& 15.9238  \\
    \hline
  \end{tabular}
  \caption{Spread prices for the benchmark parameters and several values of $\rho$, using monte Carlo, partial Monte Carlo and first and second Taylor expansions around $y^*=0$.}\label{tab1}
\end{table}

For extreme values of the correlation coefficient $\rho$, e.g. larger  than an absolute value of $ 0.7$, the Taylor expansions around $y^*=0$ do not work well. Nevertheless it is interesting to notice that the approximations are rather sensible to the  point where the expansion is taken. Moreover, by slightly changing the latter the accuracy of the method can be considerably improved. In Table \ref{tab2} spread prices for the benchmark parameters and $\rho=-0.7$ for different expansion points are shown.

\begin{table}
  \centering
  \begin{tabular}{|c|c|c|c|c|}
    \hline
    % after \\: \hline or \cline{col1-col2} \cline{col3-col4} ...
 Expansion point & Monte Carlo & Partial Monte Carlo & First approx. & Second approx.\\ \hline
   $y^*=-0.015$   &16.2463 &16.2540 &  12.3734 & 16.3011\\
    \hline
 $y^*=-0.02$ &16.2463 &16.2540 & 12.2966 &  15.8566\\
    \hline
  $y^*=-0.05$ & 16.2463 &16.2540 & 11.8434 & 12.9761 \\
    \hline
 $y^*=0$ &16.2463 & 16.2540 &12.5208 &  17.5217\\ \hline
   $y^*=0.01$  & 16.2463 &16.2540 &12.5089 & 18.2168\\
    \hline

  \end{tabular}
  \caption{Spread prices for the benchmark parameters, except $\rho=-0.7$  using Monte Carlo, partial Monte Carlo and first and second Taylor expansions  expanding around several values of $y^*$.}\label{tab2}
\end{table}
We test the Taylor expansion method for out-of-the-money contracts and compare with the price obtained via Monte Carlo with $n=10^7$ repetitions. The results are shown in Table \ref{outmoney}. The benchmark parameters are the same, except for the spot and strike prices that are changed accordingly. again a second order Taylor expansion seem to capture the Monte Carlo prices.
\begin{table}
 \centering
\begin{tabular}{|c|c|c|c|}
  \hline
  % after \\: \hline or \cline{col1-col2} \cline{col3-col4} ...
  Parameters & Monte Carlo  & Taylor (first order) & Taylor (second order)  \\ \hline
$S_T^{(1)}=90, S_T^{(2)}=100$   & 7.040956 &  5.30281 & 7.0468998 \\
  $K=5, y^*=0.065$ &  &  &  \\ \hline
 $S_T^{(1)}=90, S_T^{(2)}=110$  &  4.8015937 &  3.442070 & 4.800319 \\
 $K=5, y^*=0.037$  &  &  &  \\ \hline
  $S_T^{(1)}=90, S_T^{(2)}=100$  & 5.7623 &  4.3248347  & 5.7726138 \\
 $K=10, y^*=0.05$  &  &  &  \\ \hline
  $S_T^{(1)}=90, S_T^{(2)}=110$  & 3.89825 & 2.71934  & 3.89966 \\
 $K=10, y^*=0.03$  &  &  &  \\ \hline
  \end{tabular}
  \caption{Prices of out-of-the-money spread contracts for selected strike and spot prices. Other parameters are kept within the benchmark set.   }\label{outmoney}
  \end{table}
 \section{Conclusions}
We present an efficient method to price basket options under a multidimensional Black-Scholes model, based on a Taylor expansion of the conditional one dimensional price resulting from fixing one of the underlying assets. The formula is given in terms of exponential-power moments  of a multivariate Gaussian law and the evaluation of certain derivatives in the Black-Scholes price.\\
We implement it numerically in the case of spread contracts. Within the benchmark parametric set this approach is  in closed agreement with the price obtained via Monte Carlo, even for deep out-of-the-money contracts, at considerable lesser computational effort. A second order development seems to be sufficient to achieve a relative error around $10^{-4}$.

\end{document}